\title{\(\alpha\)-approximate Reductions: a Novel Source of Heuristics for
  Better Approximation Algorithms}
\titlerunning{\(\alpha\)-approximate Reductions as Good Heuristics for Approximation } 
\author{Fredrik Manne}{University of Bergen, Bergen, Norway}{Fredrik.Manne@uib.no}{}{}
\author{Geevarghese Philip}{Chennai Mathematical Institute, India}{gphilip@cmi.ac.in}{}{}
\author{Saket  Saurabh}{The Institute of Mathematical Sciences, HBNI, Chennai, India \\
University of Bergen, Bergen, Norway
}{saket@imsc.res.in}{}{}
\author{Prafullkumar Tale}{CISPA Helmholtz Center for Information Security, Saarbr$\ddot{\text{u}}$cken, Germany}{prafullkumar.tale@cispa.saarland}{}{}
\authorrunning{Manne, Philip, Saurabh, and Tale} 
\keywords{$\alpha$-approximate Reduction Rules, Lossy Kernel, Dominating Set,
  Approximation Algorithms}
\newcommand{\Cplusplus}{\texttt{C}\raisebox{0.5ex}{\tiny\texttt{++}}\xspace}
\newcommand{\DomSet}{\textsc{Dominating Set}\xspace}
\newcommand{\DS}{\textsc{DS}\xspace}
\newtheorem{solution lifting}{Solution~Lifting~Algorithm\xspace}[section]
\newcommand{\calO}{\ensuremath{{\mathcal O}}}
\newcommand{\OPT}{\textsc{OPT}}
\newcommand{\yes}{\textsc{Yes}\xspace}
\newcommand{\YES}{\textsc{Yes}\xspace}
\newcommand{\NO}{\textsc{No}\xspace}
\newtheorem{reduction rule}{Reduction Rule}[section]
\newtheorem{marking-scheme}{Marking Scheme}[section]
\begin{document}

\maketitle

\begin{abstract}
Lokshtanov et al.~[STOC 2017] introduced \emph{lossy kernelization} as a
mathematical framework for quantifying the effectiveness of preprocessing
algorithms in preserving approximation ratios. \emph{\(\alpha\)-approximate
  reduction rules} are a central notion of this framework. We propose that
carefully crafted \(\alpha\)-approximate reduction rules can yield improved
approximation ratios in practice, while being easy to implement as well. This is
distinctly different from the (theoretical) purpose for which Lokshtanov et
al. designed \(\alpha\)-approximate Reduction Rules. As evidence in support of
this proposal we present a new \(2\)-approximate reduction rule for the \DomSet
problem.  This rule, when combined with an approximation algorithm for \DomSet,
yields significantly better approximation ratios on a variety of benchmark
instances as compared to the latter algorithm alone.

  The central thesis of this work is that \(\alpha\)-approximate reduction rules
  can be used as a tool for designing approximation algorithms which perform
  better in practice. To the best of our knowledge, ours is the first
  exploration of the use of \(\alpha\)-approximate reduction rules as a design
  technique for practical approximation algorithms. We believe that this
  technique could be useful in coming up with improved approximation algorithms
  for other optimization problems as well.


\end{abstract}
\newpage
\setcounter{page}{1}

\section{Introduction}
The notion of \emph{kernelization} as defined in the field of parameterized
algorithms is a mathematical tool for quantifying the effectiveness of
polynomial-time preprocessing algorithms in dealing with decision
problems~\cite{DF-new,fomin2019kernelization}. An instance of a parameterized
problem is of the form \((I, k)\) where \(I\) is an instance of the
underlying---usually, but not necessarily, \NP-hard---decision problem and \(k\)
is a---usually, but not necessarily, numerical---\emph{parameter}. The informal
idea is that the parameter \(k\) captures some aspect which makes the instance
\(I\) hard to solve. A \emph{kernelization algorithm} for this parameterized
problem takes \((I, k)\) as input, runs in time polynomial in \((|I| + k)\), and
outputs an instance \((I', k')\) of the same parameterized problem. Further,
\(I'\) is a \yes instance of the decision problem if and only if \(I\) is a \yes
instance, and the size \((|I'| + k')\) of \((I', k')\) is upper-bounded by some
function \(f(k)\) of the parameter \(k\) alone. \((I', k')\) is then a
\emph{kernel} of size at most \(f(k)\). A kernelization algorithm is thus a
polynomial-time preprocessing algorithm with a guaranteed upper bound on output
size, where the guarantee comes in terms of a function of the chosen parameter
\(k\).

The notion of kernelization has turned out to be very prescient and fruitful;
useful and non-trivial upper and lower bounds on kernel sizes have been derived
for literally hundreds of decision problems (See, e.g.,
\cite{fomin2019kernelization}). Nonetheless, kernelization has one significant
drawback when it comes to judging the effectiveness of preprocessing algorithms.
In general, a kernelization algorithm gives no guarantees as to how (or whether)
the quality of \emph{approximate solutions} for a kernel relate to the quality
of approximate solutions for the original input instance. Kernelization mandates
only that the reduced instance (i) be of bounded size, and (ii) preserve the
\YES/\NO answer to the original decision question. It gives no guarantees that a
``good'' solution for the kernel, obtained by running an approximation algorithm
or a heuristic on the kernel, can be converted to a ``good'' solution for the
original instance. This is a serious handicap in practice because in many cases
the output of preprocessing algorithms are still too large to be solved exactly,
and so must be processed further by approximation algorithms or heuristics.
Using kernelization to do the preprocessing may then prevent us from
getting---with any sort of approximation guarantee---good solutions to the
actual input instance, from good solutions to the kernel.

Lokshtanov et al.~\cite{LokshtanovPRS17-lossy} introduced \emph{lossy
  kernelization} as a theoretical framework for studying those polynomial-time
preprocessing algorithms which give a two-fold guarantee: (i) on the size of the
reduced instance, and (ii) on the loss in approximation factor with which one
can convert a solution of a reduced instance to a solution of the input
instance. Such algorithms are called \emph{\(\alpha\)-approximate kernelization
  algorithms} or \emph{\(\alpha\)-approximate kernels}. We now give an informal
description of these algorithms; please see \autoref{subsec:lossy} for precise
definitions. For a real number \(\alpha \geq 1\), an
\emph{\(\alpha\)-approximate reduction rule
} for a parameterized optimization problem consists of a \emph{pair} of
algorithms, a \emph{reduction algorithm} and a \emph{solution-lifting
  algorithm}. The reduction algorithm takes an instance \((I, k)\) of the
problem as input, runs in time polynomial in \((|I| + k)\), and outputs an
instance \((I', k')\) of the problem. The solution-lifting algorithm takes the
two instances \((I, k), (I', k')\) \emph{and} an arbitrary solution \(S'\) to
\((I', k')\) and constructs a solution \(S\) to \((I, k)\) with the following
guarantee: If \(\beta \geq 1\) is the approximation factor of \(S'\) with
respect to \((I', k')\) then the approximation factor of \(S\) with respect to
\((I, k)\) is \emph{no worse than} \(\alpha \times \beta\). That is, if we
manage to get hold of a ``good'' approximate solution to the reduced instance
\((I', k')\) then the solution-lifting algorithm will give us a solution to the
original instance which is not much worse in terms of the respective
approximation factors. An \emph{\(\alpha\)-approximate kernel} is an
\(\alpha\)-approximate
reduction rule for which the size \((|I'| + k')\) of the reduced instance is
upper-bounded by some function of the parameter \(k\).

In their pioneering work Lokshtanov et al. showed, \emph{inter alia}, that
various parameterized problems which do not\footnote{Under standard
  complexity-theoretic assumptions.} admit ``classical'' polynomial kernels have
\(\alpha\)-approximate kernels (also called ``lossy kernels'') of polynomial
size. This line of work has attracted much attention, and finding lossy kernels
of polynomial size, especially for problems which don't admit classical kernels
of polynomial size, is an important---and challenging---area of current research
in parameterized
algorithms~\cite{krithika2016lossy,eiben2017lossy,dvovrak2018parameterized,eiben2019lossy}.

\subparagraph*{Our Work.} We posit that the notion of \(\alpha\)-approximate
reduction rules has important practical utility beyond, and somewhat orthogonal
to, their original theoretical \emph{raison d'\^{e}tre} which we have outlined
above. Specifically, we claim that this notion can be used as \emph{an aid for
  designing approximation algorithms} which work well \emph{in practice}. That
is, we propose that starting out by designing an \(\alpha\)-approximate
reduction rule is a good way to arrive at a heuristic which is easy to implement
and has good approximation characteristics. The intuition behind this claim is
that the various conditions which---by definition---an \(\alpha\)-approximate
reduction rule should satisfy have the combined effect that such a rule is
likely to have these desirable qualities.

Needless to say, this claim is far from being well-defined. So we can only
provide some evidence, and not a proof, in support of the claim. As evidence we
present a \(2\)-approximate reduction rule for the well-studied \DomSet problem,
and show by empirical means that this rule is a good approximation heuristic for
\DomSet. We show that when combined with a state-of-the-art approximation
algorithm for \DomSet, the new \(2\)-approximate reduction rule gives improved
approximation ratios for a variety of benchmark instances drawn from various
sources.

\subparagraph*{Our Results.} Our main theoretical contribution is the new
\(2\)-approximate reduction rule for \DomSet, along with the proof that the rule
is indeed \(2\)-approximate. Our main practical contribution is a new heuristic
approximation algorithm for \DomSet. At a high level, this algorithm works as
follows. It starts by exhaustively applying two (folklore) \(1\)-approximate
reduction algorithms. It then applies the new \(2\)-approximate reduction
algorithm once to the resulting instance, to get a reduced instance. It then
runs a known approximation algorithm (the ``drop-in'' approximation algorithm)
for \DomSet on the reduced instance to get an approximate solution for the
reduced instance. Finally, it applies the various solution-lifting algorithms in
the right order to get an approximate solution to the input instance.

We tested this new heuristic algorithm on various well-known benchmark
instances. Since all these benchmark instances consisted of \emph{sparse}
graphs, we used the approximation algorithm of Jones et
al.~\cite{jones2017parameterized} for \DomSet on sparse graphs as the drop-in
approximation algorithm. As we describe in \autoref{sec:results}, our
three-stage algorithm---first apply the various reductions, then apply the
drop-in approximation, and finally apply the lifting algorithms---gives
significantly better results on a good fraction of the benchmark instances, as
compared to applying just the drop-in algorithm alone to these instances.

\subparagraph*{Organization of the rest of the paper.} In \autoref{subsec:graph}
we list various graph-theoretic preliminaries, and in \autoref{subsec:lossy} we
give a concise description of those concepts from the lossy kernels framework
which we need in the rest of the paper. In \autoref{sec:lossy} we describe the
two folklore reduction rules for \DomSet and prove that each of them is
\(1\)-approximate. In the same section we state our new reduction rule for
\DomSet and prove that it is \(2\)-approximate. In \autoref{sec:exp-setup} we
describe our experimental setup, and in \autoref{sec:results} we describe our
algorithm and the results we obtained on the various benchmark instances. We
conclude in \autoref{sec:conclusion}.


\section{Preliminaries}
\label{sec:prelims}

For positive constants $x, y, z, w$, we have $\frac{x + y}{z + w} \le \max\{\frac{x}{z}, \frac{y}{w}\}$.

\subsection{Graph Theory}
\label{subsec:graph}

All our graphs are undirected and simple. For a graph $G$ we use $V(G)$ and
$E(G)$ to denote the set of vertices and edges of \(G\), respectively. Unless
specified otherwise, we use $n, m$ to denote the cardinalities of sets $V(G)$
and $E(G)$, respectively. Two vertices $u, v$ are said to be \emph{adjacent} if
there is an edge $uv$ in the graph. The \emph{neighborhood} of a vertex $v$,
denoted by $N_G(v)$, is the set of vertices adjacent to $v$ and its
\emph{degree} $deg_G(v)$ is $|N_G(v)|$. The \emph{closed neighborhood} of a
vertex $v$, denoted $N_G[v]$, is $\{v\} \cup N(v)$. The subscript in the
notation for neighborhood and degree is omitted if the graph being discussed is
clear. For a subset $S$ of $V(G)$ we define
$N(S) = (\bigcup_{v \in S} N(S) ) \setminus S$ and
$N[S] = \bigcup_{v \in S} N[s]$. For set $S \subseteq V(G)$ we use $G - S$ to
denote the graph obtained by deleting $S$ from $G$, and $G[S]$ to denote the
subgraph of $G$ \emph{induced} by $S$. An \emph{isolated} vertex is not adjacent
to any other vertex. A \emph{pendant} is a vertex of degree one.

The \emph{average degree} of graph $G$, denoted by $avg\_deg(G)$, is defined as
$\frac{1}{|V(G)|} \cdot \sum_{v \in V(G)} deg(v)$. It is easy to see that
$avg\_deg(G) = 2 |E(G)|/V(G)$. A graph $G$ is called \emph{$d$-degenerate} if
every subgraph of $G$ contains a vertex of degree at most $d$. If $G$ is a
$d$-degenerate graph then $|E(G)| \le d |V(G)|$.

We say that a vertex set $S$ \emph{dominates} another vertex set $W$ if $W$ is
contained in $N[S]$. A \emph{dominating set} for a graph is a set of vertices
that dominates the entire vertex set of the graph. More formally, a vertex
subset $S$ is a \emph{dominating set} of graph $G$ if $V(G) = N[S]$.
Equivalently, a dominating set of $G$ is a set of vertices $S$ such that every
vertex which is not in $S$ is adjacent with some vertex in $S$. In the decision
version of the \DomSet problem, the input consists of a graph \(G\) and an
integer \(k\), and the task is to determine whether \(G\) has a dominating set
of size at most \(k\). We work with a generalized version of this problem. In
this version, we are given a graph $G$, an integer $k$ and a bi-partition $(V_b,
V_r)$ of vertex set $V(G)$. The objective is to find, if it exists, a vertex
subset of size at most $k$ which dominates all the vertices in $V_b$. Formally,
the task is to determine whether there exists a vertex set $S \subseteq V_b \cup
V_r$ such that $|S| \le k$ and $V_b \subseteq N[S]$. The sets $V_b, V_r$ can be
thought of as vertices with colours \emph{blue} and \emph{red}, respectively.
Vertices coloured blue are yet to be dominated while red vertices have already
been dominated. Note that the classical \DomSet problem reduces to this version
by setting $V_b = V(G)$ and $V_r = \emptyset$.

\subsection{Some Concepts from the Lossy Kernelization Framework}
\label{subsec:lossy}

This section contains a brief overview of those concepts from the lossy
kernelization framework that we need to describe our work. We encourage the
reader to see~\cite{LokshtanovPRS17-lossy} for a more comprehensive discussion
of this framework. We start with the definition of a {\em parameterized
  optimization (maximization/minimization) problem}, which is the parameterized
analog of an optimization problem in the theory of approximation algorithms.

\begin{definition} A parameterized optimization (maximization / minimization) problem is a computable function $\Pi: \Sigma^* \times \mathbb{N} \times \Sigma^* \mapsto \mathbb{R} \cup \{\pm \infty\}$.
\end{definition}

The instances of $\Pi$ are pairs $(I,k) \in \Sigma^* \times \mathbb{N}$ and a
solution to $(I,k)$ is simply a string $S \in \Sigma^*$ such that
$|S| \leq |I|+k$. The {\em value} of a solution $S$ is $\Pi(I,k,S)$. In this
article, we work with a \emph{parameterized minimization problem};
\emph{parameterized maximization problems} can be dealt with in a similar
way. 
The {\em optimum value} of $(I,k)$ is defined as:
$\textsc{OPT}_{\Pi}(I, k)= \min_{S \in \Sigma^*,\, |S| \leq |I|+k} \Pi(I,k,S)$,
and an {\em optimum solution} for $(I,k)$ is a solution $S$ such that
$\Pi(I,k,S)=\textsc{OPT}_{\Pi}(I, k)$. For a constant $c \ge 1$, $S$ is a
\emph{$c$-factor approximate} solution for $(I,k)$ if
$\frac{\Pi(I, k, S)}{OPT_{\Pi}(I, k)} \le c$. We omit the subscript $\Pi$ if the
problem being discussed is clear from the context. We now formally define an
$\alpha$-approximate reduction rule.

\begin{definition} Let $\alpha \ge 1$ be a real number and $\Pi$ be a
  parameterized minimization problem. An {\em $\alpha$-approximate reduction
    rule} is defined as a pair of polynomial-time algorithms, called the {\em
    reduction algorithm} and the {\em solution lifting algorithm}, that satisfy
  the following properties.
\begin{itemize}
\item Given an instance $(I,k)$ of $\Pi$, the reduction algorithm computes an instance $(I',k')$ of $\Pi$. 
\item Given the instances $(I,k)$ and $(I',k')$ of $\Pi$, and a solution $S'$ to $(I',k')$, the solution lifting algorithm computes a solution $S$ to $(I,k)$ such that $\frac{\Pi(I,k,S)}{\textsc{OPT}(I,k)} \leq \alpha \cdot \frac{\Pi(I',k',S')}{\textsc{OPT}(I',k')}$.
\end{itemize}
\end{definition}


\begin{definition}
  An $\alpha$-approximate reduction rule is said to be {\em strict} if for every
  instance $(I, k)$, reduced instance $(I', k')$ (which is obtained by applying
  reduction algorithm on $(I, k)$) and solution $S'$ to $(I, k)$, the solution
  lifting algorithm produces a solution $S$ to $(I, k)$ which satisfies
  $\frac{\Pi(I,k,s)}{\textsc{OPT}(I,k)} \leq \max \{
  \frac{\Pi(I',k',s')}{\textsc{OPT}(I',k')} , \alpha \}$.
\end{definition}



\begin{definition}
  An {\em $\alpha$-approximate kernel 
  } for $\Pi$ is an $\alpha$-approximate reduction rule such that the size
  \((|I'| + k')\) of the output instance of the reduction algorithm is upper
  bounded by a computable function of the original parameter $k$.
\end{definition}

For the sake of completeness, we present a definition of (classical)
kernelization. 

\begin{definition} Given an instance $(I, k)$ of some problem $\Pi$ as input, a
  kernelization algorithm returns an instance $(I', k')$ of $\Pi$ which
  satisfies following properties:
\begin{itemize}
\item $|I'| + k'$ is upper bounded by some computable function $f(k)$ which depends only on $k$.
\item $(I, k)$ is a \yes\ instance of $\Pi$ if and only if $(I', k')$ is a \yes\ instance of $\Pi$.
\end{itemize}
\end{definition}
For more details on parameterized complexity and kernelization we refer the
reader to the books of Downey and Fellows~\cite{DF-new}, Flum and
Grohe~\cite{flumgrohe}, Niedermeier~\cite{niedermeier2006}, and the more recent
books by Cygan et al.~\cite{saurabh-book} and Fomin et
al.~\cite{fomin2019kernelization}.

\section{$\alpha$-approximate Reduction Rules for Dominating Set}
\label{sec:lossy}

In the previous section we defined a generalized version of the \DomSet problem
in which the input is of the form $(G (V_b, V_r), k)$ where $G$ is a graph, $k$
is a positive integer and $(V_b, V_r)$ is a partition of $V(G)$. The objective
is to determine whether there exists a vertex set $S \subseteq V_b \cup V_r$ of
size at most $k$ such that $V_b \subseteq N[S]$ holds. We define an optimization
version of this generalized \DomSet problem in the following way.

$$\DS(G(V_b, V_r), k, S) = \left\{ \begin{array}{rl} \infty & \text{ if }V_b \not\subseteq N[S] \\ \min\{|S|, k + 1\} &\mbox{ otherwise} \end{array} \right.$$ 

Note that with this definition, one can treat all dominating sets of size
greater or equal to $k + 1$ as \emph{equally bad}. From the theoretical point of
view it is crucial to define the problem in this manner, as explained in the
part titled \textbf{Capping the objective function at $k + 1$} on Page~$15$ of
the arXiv version~\cite{lokshtanov2016lossy-arxiv} of the pioneering paper of
Lokshtanov et al.~\cite{LokshtanovPRS17-lossy}. However, from a practical point
of view, it is not necessary to distinguish solutions whose size lie above or
below a specific value. In fact, in most of the practical cases, the parameter
$k$ is not known in advance. For the sake of clarity, we replace $\DS(G(V_b,
V_r), k, S)$ by $\DS(G(V_b, V_r), S)$ and $\min\{|S|, k + 1\}$ by $|S|$ in the
above definition. Hence the input of the optimization version is simply $G(V_b,
V_r)$. Note that no reduction rule mentioned below changes the value of $k$. We
remark that in classical kernelization, similar reduction rules decrease the
value of $k$ but in the optimization version, we can prove the safeness without
decreasing the value of $k$. Hence, we can justify the above simplification in
the definition by assuming that the given value of $k$ is equal to $|V(G)|$
which does not change.

The minimum value taken over all valid solutions is denoted by $\OPT(G(V_b,
V_r))$. We start with the following simple reduction rules and their
corresponding solution lifting algorithms. In classical kernelization, these two
rules correspond to taking isolated vertices and the neighbours of pendant
vertices into a solution.

\begin{reduction rule}\label{rr:isolated} Let $X$ be the set of vertices in
  $V_b$ such that every vertex in $X$ is an isolated vertex in $G$. Then, move
  all vertices in $X$ from $V_b$ to $V_r$. Formally, let $V_b' = V_b \setminus
  X$ and $V_r' = V_r \cup X$. Return $G (V_b', V_r')$.
\end{reduction rule}

\begin{solution lifting}\label{sl:isolated} Let $S'$ be a solution for $G(V'_b, V'_r)$ and $X$ be the set of vertices moved from $V_b$ to $V_r$.
Return $S' \cup X$ as a solution for $G(V_b, V_r)$. 
\end{solution lifting}

\begin{lemma}\label{lemma:pendant-safe} Reduction Rule~\ref{rr:isolated} 
  and Solution Lifting Algorithm~\ref{sl:isolated} together constitute a strict
  $1$-approximate reduction rule.
\end{lemma}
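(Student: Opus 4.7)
The plan is to verify the two defining conditions of a strict $1$-approximate reduction rule: (i) both algorithms run in polynomial time, which is immediate since each merely inspects the vertex set and takes unions, and (ii) the solution-lifting inequality holds with $\alpha = 1$.

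The key structural observation I would establish first is that isolated vertices in $V_b$ are \emph{forced}: an isolated vertex $v$ has $N[v] = \{v\}$, so the only way $v \in V_b$ can be dominated is to include $v$ itself in the solution. Consequently, every valid solution $S$ of $G(V_b, V_r)$ satisfies $X \subseteq S$, and therefore $\OPT(G(V_b, V_r)) = \OPT(G(V_b', V_r')) + |X|$. Indeed, any optimal $S$ for the original instance can be written as $(S \setminus X) \cup X$, where $S \setminus X$ dominates $V_b' = V_b \setminus X$ in $G(V_b', V_r')$; conversely, any solution $S^*$ for the reduced instance gives a solution $S^* \cup X$ for the original of size at most $|S^*| + |X|$.

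Next I would analyse the lifted solution $S := S' \cup X$ produced by Solution Lifting Algorithm~\ref{sl:isolated}. Validity is immediate: $S'$ dominates $V_b'$ in $G(V_b', V_r')$ and hence in $G$, while each $v \in X$ lies in $S$ and so dominates itself, covering $V_b = V_b' \cup X$. For the size bound, $|S| \leq |S'| + |X|$. Combined with the identity $\OPT(G(V_b, V_r)) = \OPT(G(V_b', V_r')) + |X|$, this gives
\[
\frac{\DS(G(V_b,V_r), S)}{\OPT(G(V_b, V_r))} \;\leq\; \frac{|S'| + |X|}{\OPT(G(V_b', V_r')) + |X|}.
\]
Applying the elementary inequality $\frac{x+y}{z+w} \le \max\{\frac{x}{z}, \frac{y}{w}\}$ recorded in the preliminaries (with $y = w = |X|$, so $y/w = 1$) yields exactly
\[
\frac{\DS(G(V_b,V_r), S)}{\OPT(G(V_b, V_r))} \;\leq\; \max\!\left\{ \frac{\DS(G(V_b', V_r'), S')}{\OPT(G(V_b', V_r'))},\; 1 \right\},
\]
which is the defining strictness inequality with $\alpha = 1$.

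There is really no hard step here; the only subtlety to be careful about is the degenerate case $|X| = 0$, where the inequality $y/w = |X|/|X|$ is undefined but the reduction rule is the identity and the claim is trivial, and the case where $S'$ is not a valid solution for $G(V_b', V_r')$, in which case $\DS(G(V_b', V_r'), S') = \infty$ and the right-hand side is $\infty$, making the inequality vacuous. I would note both cases briefly and then conclude.
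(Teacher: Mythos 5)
Your proof is correct and follows essentially the same route as the paper's: argue that the isolated vertices in $X$ are forced into any solution (giving $\OPT(G(V_b',V_r')) \le \OPT(G(V_b,V_r)) - |X|$), verify that $S' \cup X$ dominates $V_b$ with $|S' \cup X| \le |S'| + |X|$, and conclude via the elementary inequality $\frac{x+y}{z+w} \le \max\{\frac{x}{z}, \frac{y}{w}\}$. Your additional remarks on the degenerate cases ($X = \emptyset$ and $\DS(G(V_b',V_r'),S') = \infty$) are welcome care that the paper omits, but they do not change the argument.
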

\begin{proof} Since $X$ is a set of vertices in $V_b$, any dominating set of $G$
  must contain some subset of vertices which dominates $X$. As $X$ is a
  collection of isolated vertices, the only way to dominate $X$ is to include
  all of \(X\) in the solution.  This implies that
  $\OPT(G(V'_b, V'_r)) \le \OPT(G(V_b, V_r)) - |X|$ holds, since $X$ is
  contained in the set $V'_r$ in the reduced instance.

  Let $S'$ be a solution for $G(V'_b, V'_r)$. Then $V'_b \subseteq N[S']$
  holds. Since $V_b = V_b' \cup X$, the set $S' \cup X$ is a solution for
  $G(V_b, V_r)$. Hence
  $\DS(G(V_b, V_r), S' \cup X) \le |S'| + |X| \le \DS(G(V'_b, V'_r)), S') + |X|$
  holds.

Combining these two inequalities, we obtain the following inequality which concludes the proof.
$$\frac{\DS(G(V_b, V_r), S' \cup X)}{\OPT(G(V_b, V_r))} \le \frac{\DS(G(V'_b, V'_r)), S') + |X|}{\OPT(G(V'_b, V'_r)) + |X|} \le \max \Big\{\frac{\DS(G(V'_b, V'_r)), S')}{\OPT(G(V'_b, V'_r))}, 1\Big\}.$$
\end{proof}

\begin{reduction rule}\label{rr:pendant} Let $v$ be a vertex in $V_b$ such that
  \(v\) is a pendant vertex in $G$, and let $u$ be the unique neighbor of $v$ in
  $G$. Move all vertices in $N[u]$ from $V_b$ to $V_r$. Formally, let
  $V_b' = V_b \setminus N[u]$ and $V_r' = V_r \cup N[u]$. Return
  $G (V_b', V_r')$.
\end{reduction rule}

\begin{solution lifting}\label{sl:pendant} Let $S'$ be a solution for
  $G(V'_b, V'_r)$ and $u$ be the vertex mentioned in Reduction
  Rule~\ref{rr:pendant}.  Return $S' \cup \{u\}$ as a solution for
  $G(V_b, V_r)$.
\end{solution lifting}

\begin{lemma}\label{lemma:pendant-safe} Reduction Rule~\ref{rr:pendant} and
  Solution Lifting Algorithm~\ref{sl:pendant} together constitute a strict
  $1$-approximate reduction rule.
\end{lemma}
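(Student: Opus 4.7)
The plan is to mimic the structure of the proof for Reduction Rule~\ref{rr:isolated}: first show that the reduced instance has optimum value at most one less than the original, next show that lifting costs at most one extra vertex, and then combine via the preliminary inequality $\frac{x+y}{z+w}\le\max\{\frac{x}{z},\frac{y}{w}\}$ stated at the beginning of Section~\ref{sec:prelims}.

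The first step is an exchange argument to establish $\OPT(G(V'_b,V'_r)) \le \OPT(G(V_b,V_r)) - 1$. Let $S^\star$ be an optimum dominating set for $G(V_b,V_r)$. Since $v\in V_b$ is a pendant whose only neighbor is $u$, we have $N[v]=\{u,v\}$, so $S^\star$ must contain at least one of $u$ or $v$. If $v\in S^\star$ and $u\notin S^\star$, then because $N[v]\subseteq N[u]$ the set $(S^\star\setminus\{v\})\cup\{u\}$ is still a dominating set for $G(V_b,V_r)$ of the same size; so we may assume without loss of generality that $u\in S^\star$. Then $S^\star\setminus\{u\}$ is a solution for $G(V'_b,V'_r)$: every vertex of $V'_b = V_b\setminus N[u]$ lies outside $N[u]$ and hence was dominated by some vertex of $S^\star$ other than $u$. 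This gives $\OPT(G(V'_b,V'_r)) \le |S^\star| - 1 = \OPT(G(V_b,V_r)) - 1$.

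The second step is to bound the lifted solution. Given a solution $S'$ for $G(V'_b,V'_r)$, the set $S = S'\cup\{u\}$ dominates $V_b$: vertices in $V'_b$ are dominated by $S'$, and vertices in $V_b\cap N[u]$ are dominated by $u$. Hence $\DS(G(V_b,V_r), S'\cup\{u\}) \le |S'|+1 \le \DS(G(V'_b,V'_r), S') + 1$.

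Combining the two inequalities and invoking the preliminary arithmetic fact yields
\[
\frac{\DS(G(V_b,V_r), S'\cup\{u\})}{\OPT(G(V_b,V_r))}
\le \frac{\DS(G(V'_b,V'_r), S') + 1}{\OPT(G(V'_b,V'_r)) + 1}
\le \max\!\left\{\frac{\DS(G(V'_b,V'_r), S')}{\OPT(G(V'_b,V'_r))},\; 1\right\},
\]
which is exactly the strict $1$-approximate guarantee. The only subtlety I anticipate is the exchange step ensuring some optimum contains $u$; once that is noted, the rest is a direct transcription of the argument used for Reduction Rule~\ref{rr:isolated}, with the role of $|X|$ played by the single vertex $u$.
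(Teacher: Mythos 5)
Your proof is correct and follows essentially the same route as the paper's: bound $\OPT(G(V'_b,V'_r))$ by $\OPT(G(V_b,V_r))-1$, bound the lifted solution by $\DS(G(V'_b,V'_r),S')+1$, and combine with the $\frac{x+y}{z+w}\le\max\{\frac{x}{z},\frac{y}{w}\}$ fact. The only difference is that you spell out the exchange step (replacing $v$ by $u$ in an optimum solution) which the paper leaves implicit, and this is a harmless--indeed welcome--elaboration rather than a different approach.
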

\begin{proof} Since vertex $v$ is in $V_b$, any dominating set of $G$ must
  contain a subset of vertices which dominates $v$.  As $v$ is a pendant vertex,
  the only vertices which can dominate $v$ are the unique neighbor $u$ of $v$,
  or $v$ itself.  This implies that
  $\OPT(G(V'_b, V'_r)) \le \OPT(G(V_b, V_r)) - |1|$ holds, as $N[u]$ is now part
  of $V'_r$ in the reduced instance.

  Let $S'$ be a solution for $G(V'_b, V'_r)$. Then $V'_b \subseteq N[S']$
  holds. Since $V_b = V_b' \cup N[u]$, the set $S' \cup \{u\}$ is a solution for
  $G(V_b, V_r)$. Hence
  $\DS(G(V_b, V_r), S' \cup \{u\}) \le |S'| + 1 \le \DS(G(V'_b, V'_r)), S') + 1$
  holds.

  Combining these two inequalities, we obtain the following inequality which
  concludes the proof.
$$\frac{\DS(G(V_b, V_r), S' \cup \{u\})}{\OPT(G(V_b, V_r))} \le \frac{\DS(G(V'_b, V'_r)), S') + 1}{\OPT(G(V'_b, V'_r)) + 1} \le \max \Big\{\frac{\DS(G(V'_b, V'_r)), S')}{\OPT(G(V'_b, V'_r))}, 1\Big\}.$$
\end{proof}

Note that we apply Reduction Rule~\ref{rr:isolated} only once while we keep
repeatedly applying Reduction Rule~\ref{rr:pendant} as long as there is a
pendant vertex in
$V_b$. 
It is not difficult to see that the following lemma holds.

\begin{lemma} \label{lemma:rr-time} There exists an algorithm that takes a graph
  $G(V_b, V_r)$ as input, runs in $\calO(n + m)$ time, and exhaustively applies
  Reduction Rules~\ref{rr:isolated} and ~\ref{rr:pendant}.
\end{lemma}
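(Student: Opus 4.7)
The plan is to exploit a key structural observation about the two rules: neither Reduction Rule~\ref{rr:isolated} nor Reduction Rule~\ref{rr:pendant} alters the underlying graph $G$; both only reassign vertices between the colour classes $V_b$ and $V_r$. Consequently, the set of isolated vertices and the set of pendant vertices of $G$ are fixed throughout the entire execution, and can be precomputed once.

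First, I would sweep the adjacency lists to compute $\deg_G(v)$ for every $v$, and read off the set $I = \{v : \deg_G(v) = 0\}$ of isolated vertices together with the list $P = \{v : \deg_G(v) = 1\}$ of pendant vertices; this costs $O(n + m)$. Reduction Rule~\ref{rr:isolated} is then dispatched in $O(n)$ by moving every vertex in $I \cap V_b$ from $V_b$ to $V_r$.

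To handle Reduction Rule~\ref{rr:pendant} exhaustively, I would walk through $P$ once. For each $v \in P$: if $v$ is currently in $V_b$, let $u$ be its unique neighbour, traverse $N[u]$, and move every vertex of $N[u] \cap V_b$ into $V_r$; if $v$ has already been recoloured, skip it. Correctness of exhaustiveness is immediate: since $G$ itself never changes, no new pendants are ever created, and every originally blue pendant is either explicitly processed (and hence recoloured via $u \in N[u]$, since $v \in N[u]$ too) or is recoloured when an earlier pendant sharing the same neighbour $u$ is processed.

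For the running time, observe that whenever processing some $v \in P$ triggers a traversal of $N[u]$, the vertex $u$ itself is moved to $V_r$; hence if any other pendant $v' \in P$ has the same $u$ as its unique neighbour, then $v' \in N[u]$ is also recoloured in that same step and never triggers another traversal of $N[u]$. Thus each vertex $u$ has its neighbourhood scanned at most once during the whole execution, contributing a total of $\sum_u (\deg_G(u) + 1) \le 2m + n$ work; together with the initial degree computation and the $O(n)$ sweep for Rule~\ref{rr:isolated}, the overall cost is $O(n + m)$. The only conceptual hurdle is ruling out a naive re-scan of the whole graph after each application in search of freshly created pendants; the invariance of $G$ under both rules eliminates precisely this danger and is what collapses the running time to linear.
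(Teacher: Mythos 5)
Your proposal is correct. The paper gives no proof of this lemma (it is dismissed with ``It is not difficult to see that the following lemma holds''), and your argument supplies exactly the straightforward justification one would expect: since both rules only move vertices from $V_b$ to $V_r$ and never modify $G$, the isolated and pendant vertices can be precomputed once, and your amortization---each closed neighbourhood $N[u]$ is scanned at most once because every pendant sharing the neighbour $u$ is recoloured during that scan and the colouring only shrinks $V_b$---correctly yields the $\calO(n+m)$ bound.
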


We now present the final reduction rule. 

\begin{reduction rule}\label{rr:d-deg} Find a subset $X$ of vertices in $V_b$
  such that there is a one-to-one function
  $\psi : X \rightarrow V_b \setminus X$ where
  \begin{enumerate}
  \item \label{prop:outside-nbr} for every $x$ in $X$, $\psi(x) \not\in N[x]$, 
  \item \label{prop:no-edges} for any two vertices \(x_{1}, x_{2}\) in $X$, the
    vertex \(\psi(x_{2})\) is not in the set \(N(x_{1})\), and,
  \item \label{prop:no-intersect} for any two distinct vertices $x_1, x_2$ in $X$, the
    set $N[\psi(x_1)] \cap N[\psi(x_2)]$ is empty.
  \end{enumerate}
  Move all vertices in $N[X] \cap V_{b}$ to $V_r$. Formally, let
  $V_b' = V_b \setminus N[X]$ and $V_r' = V_r \cup N[X]$. Return
  $G (V_b', V_r')$.
\end{reduction rule}

\begin{solution lifting}\label{sl:d-deg} Let $S'$ be a solution for $G(V'_b, V'_r)$ and $X$ be the subset of $V_b$ mentioned in Reduction Rule~\ref{rr:d-deg}. 
Return $S' \cup X$ as a solution for $G(V_b, V_r)$. 
\end{solution lifting}

\begin{lemma}\label{lemma:d-deg-safe} Reduction Rule~\ref{rr:d-deg} and Solution
  Lifting Algorithm~\ref{sl:d-deg} together constitute a $2$-approximate
  reduction rule.
\end{lemma}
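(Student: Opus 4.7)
The plan is to verify feasibility of the lifted solution and then extract the $2$-approximation ratio from two separate lower bounds on the original optimum. Write $\OPT := \OPT(G(V_b, V_r))$, $\OPT' := \OPT(G(V'_b, V'_r))$, and set $S = S' \cup X$. First I would argue feasibility: since $V_b = V'_b \cup (N[X] \cap V_b)$, the inclusion $V'_b \subseteq N[S']$ (which holds because $S'$ solves the reduced instance) combined with $N[X] \cap V_b \subseteq N[X] \subseteq N[S]$ gives $V_b \subseteq N[S]$, so $S$ is feasible and $\DS(G(V_b, V_r), S) \le |S'| + |X|$.

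Next I would derive two lower bounds on $\OPT$. The first, $\OPT \ge \OPT'$, is immediate: any dominating set for $G(V_b, V_r)$ must also dominate $V'_b \subseteq V_b$, hence it is already feasible for the reduced instance. The crucial second bound, $\OPT \ge |X|$, is where property~\ref{prop:no-intersect} of the rule earns its keep: for every $x \in X$ the image $\psi(x)$ lies in $V_b$ and must be dominated, so every feasible solution meets $N[\psi(x)]$, and by property~\ref{prop:no-intersect} these closed neighborhoods are pairwise disjoint across distinct $x \in X$. Thus any feasible solution — and in particular an optimal one — contains at least $|X|$ vertices. I expect this to be the step requiring most care; properties~\ref{prop:outside-nbr} and~\ref{prop:no-edges} do not seem to enter this counting argument directly, though they plausibly matter for ensuring that the rule meaningfully shrinks the instance and that the loss factor $2$ cannot be improved.

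Finally, I would assemble the pieces by splitting the numerator of the ratio and using each lower bound on the corresponding term:
\[
\frac{\DS(G(V_b, V_r), S)}{\OPT} \;\le\; \frac{|S'|}{\OPT} + \frac{|X|}{\OPT} \;\le\; \frac{|S'|}{\OPT'} + 1.
\]
Assuming $S'$ is feasible (otherwise the right-hand side of the target inequality is infinite and there is nothing to prove), we have $|S'| \ge \OPT'$, so $|S'|/\OPT' \ge 1$ and the trailing $+1$ is at most a second copy of $|S'|/\OPT'$. This yields $\frac{\DS(G(V_b, V_r), S)}{\OPT} \le 2 \cdot \frac{\DS(G(V'_b, V'_r), S')}{\OPT'}$, which is exactly the $2$-approximate guarantee demanded by the definition.
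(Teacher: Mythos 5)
Your proof is correct, and its crucial counting step takes a genuinely different route from the paper's. Writing $\OPT=\OPT(G(V_b,V_r))$, $\OPT'=\OPT(G(V'_b,V'_r))$ and $\beta=|S'|/\OPT'$: the paper bounds the \emph{reduced-instance} solution from below --- with $Z=\psi(X)$, Properties~\ref{prop:outside-nbr} and~\ref{prop:no-edges} give $Z\cap N[X]=\emptyset$, hence $Z\subseteq V'_b$, so $S'$ must meet the pairwise disjoint (Property~\ref{prop:no-intersect}) sets $N[z]$ for $z\in Z$, giving $|X|=|Z|\le|S'|$, thus $|S'\cup X|\le 2|S'|$, which combined with $\OPT'\le\OPT$ yields the factor $2$. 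You instead bound the \emph{original} optimum, $\OPT\ge|X|$, using only Property~\ref{prop:no-intersect} together with the fact that the images $\psi(x)$ lie in $V_b$, and then split the ratio as $|S'|/\OPT+|X|/\OPT\le\beta+1\le 2\beta$. Both arguments are sound; yours needs only Property~\ref{prop:no-intersect} for the approximation guarantee (Properties~\ref{prop:outside-nbr} and~\ref{prop:no-edges} are exactly what the paper's route needs to keep the images blue in the reduced instance, as you correctly suspect), and it gives the marginally sharper intermediate bound $\beta+1$ before relaxing to $2\beta$. Your final assembly is also more careful than the paper's displayed chain, whose middle term $\bigl(\DS(G(V'_b,V'_r),S')+|X|\bigr)/\bigl(\OPT(G(V'_b,V'_r))+|X|\bigr)$ tacitly presumes $\OPT\ge\OPT'+|X|$, an inequality the paper never establishes (only $\OPT'\le\OPT$ is proved) and which need not hold; your two separate lower bounds on $\OPT$ avoid this while reaching the same conclusion. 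What the paper's route buys in exchange is the optimum-independent size bound $|S'\cup X|\le 2|S'|$ and an explanation of why all three conditions appear in Reduction Rule~\ref{rr:d-deg}.
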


\begin{proof} 
  As $V'_b \subseteq V_b$ holds, any solution for $G(V_b, V_r)$ is also a
  solution for $G(V_b', V'_r)$. This implies
  $\OPT(G(V_b', V'_r)) \le \OPT(G(V_b, V_r))$.

  Let the set \(X\) be as defined in Reduction Rule~\ref{rr:d-deg}, and let $Z$
  be the range of the function $\psi$.  Formally,
  $Z := \{z \in V(G) \setminus X\mid\ \exists x \in X \;;\; \psi(x) = z\}$.  By
  Properties~\ref{prop:outside-nbr} and~\ref{prop:no-edges} of Reduction
  Rule~\ref{rr:d-deg}, $Z \cap N[X]$ is the empty set.  Since only vertices in
  $N[X]$ have been moved from $V_b$ to $V_r$, we have that $Z \subseteq V'_b$
  holds.  Hence any solution \(S'\) for $G(V'_b, V'_r)$ contains at least one
  vertex from $N[z]$ for every vertex $z$ in $Z$.  By
  Property~\ref{prop:no-intersect} of Reduction Rule~\ref{rr:d-deg}, for any two
  vertices $z_1, z_2$ in $Z$, we have that $N[z_1] \cap N[z_2] = \emptyset$
  holds.  This implies that $|Z| \le |S'|$ holds. Since $\psi$ is a one-to-one
  function we get that $|X| = |Z| \le |S'|$ holds, and this in turn implies that
  $|S' \cup X| \le 2 |S'|$ holds.

  Combining the above inequalities we obtain the following inequality which
  concludes the proof.
$$\frac{\DS(G(V_b, V_r), S' \cup X)}{\OPT(G(V_b, V_r))} \le \frac{\DS(G(V'_b, V'_r)), S') + |X|}{\OPT(G(V'_b, V'_r)) + |X|} \le 2 \cdot \frac{\DS(G(V'_b, V'_r)), S')}{\OPT(G(V'_b, V'_r))}$$
\end{proof}

We remark that we apply this $2$-approximate reduction rule only once.

\begin{lemma} \label{lemma:rr-time} There exists an algorithm that takes a graph
  $G(V_b, V_r)$ as input, runs in time $\calO(n \log (n) + m)$ time, and applies
  Reduction Rule~\ref{rr:d-deg} once.
\end{lemma}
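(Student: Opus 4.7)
The plan is to exhibit a one-pass greedy algorithm that constructs a valid triple $(X, \psi, Z := \psi(X))$ (possibly empty) satisfying the three properties of Reduction Rule~\ref{rr:d-deg}, together with a careful marking scheme and data structure that keep the running time within the claimed bound.

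First, I would sort the vertices of $V_b$ by some fixed criterion---say, non-decreasing degree; this is the sole source of the $\log n$ factor. On each vertex $v \in V(G)$ I would maintain two Boolean flags: $\text{inNZ}(v)$ indicating $v \in N[Z]$, and $\text{inNX}(v)$ indicating $v \in N[X]$; call $v$ \emph{blocked} when either flag is set. I would also maintain a doubly linked list $L$ of the currently unblocked vertices of $V_b$ together with its size. The main loop processes the sorted vertices one by one; for the current vertex $z$, if $z$ is blocked, skip it. Otherwise walk along the adjacency list of $z$: if some neighbor $w$ satisfies $\text{inNZ}(w) = \text{true}$, skip (this is precisely the check that keeps Property~\ref{prop:no-intersect} invariant). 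If the check passes, temporarily mark the vertices of $N[z]$ and scan $L$ for an entry $x$ that is not marked, i.e.\ $x \notin N[z]$. If such an $x$ is found, add $z$ to $Z$, add $x$ to $X$, set $\psi(x) := z$, set $\text{inNZ}$ on every vertex of $N[z]$ and $\text{inNX}$ on every vertex of $N[x]$, remove the newly blocked vertices from $L$, and clear the temporary marks on $N[z]$. At the end I would return $V_b' := \{v \in V_b : \text{inNX}(v) = \text{false}\}$ and $V_r' := V(G) \setminus V_b'$ by a single linear scan.

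Correctness is straightforward from the invariants: whenever $z$ is added to $Z$ and $x$ to $X$ with $\psi(x) = z$, the vertex $z$ is unblocked (so $z \notin N[X] \cup N[Z]$) and $x$ is unblocked and lies outside $N[z]$. This yields Property~\ref{prop:outside-nbr} (since $x \notin N[z]$ is equivalent to $\psi(x) = z \notin N[x]$), Property~\ref{prop:no-edges} (any pre-existing $x_1 \in X$ already has $N[x_1]$ marked via $\text{inNX}$, so the unblocked $z$ is not in $N(x_1)$; similarly, a new $x$ is unblocked so it does not lie in $N[z']$ for any earlier $z' \in Z$), and Property~\ref{prop:no-intersect} (guaranteed by the explicit $\text{inNZ}$ check on the neighborhood of $z$).

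The main obstacle in the time analysis is the partner search, since on the face of it, scanning $L$ afresh for each candidate $z$ could cost more than $\mathcal{O}(m)$ in total. I would resolve this with a pigeonhole argument: if $|L| > |N[z]|$, then an unmarked entry of $L$ must appear within the first $|N[z]| + 1$ positions, so the search succeeds in $\mathcal{O}(\deg(z))$ comparisons; if $|L| \le |N[z]|$, the entire scan of $L$ costs $\mathcal{O}(|L|) = \mathcal{O}(\deg(z))$ and either finds an $x$ or correctly reports that no valid partner exists. In either case the per-iteration cost is $\mathcal{O}(\deg(z))$. Since each vertex of $V_b$ is the $z$ of at most one iteration, and each edge is touched only a constant number of times across the neighborhood scans, the temporary markings, and the permanent updates of $\text{inNZ}$ and $\text{inNX}$, the total work in the main loop is $\mathcal{O}(\sum_{z \in V_b} \deg(z)) = \mathcal{O}(m)$; combined with the initial sort this gives the stated $\mathcal{O}(n \log n + m)$ bound.
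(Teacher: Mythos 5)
Your proposal is correct, but it proves the lemma by a genuinely different algorithm than the paper's. The paper's proof is a short sketch of a quality-oriented greedy: it repeatedly picks $x \in V_b$ with the most blue neighbours, then picks its image $z$ as a non-neighbour of $x$ minimising the ``second neighbourhood'' size $scd\_nbr(z)$ (a heuristic chosen to block as few future images as possible, which matters for the experiments), and asserts the $\calO(n\log n + m)$ bound ``using heap data structures'' without verifying the three properties or detailing the time analysis. You instead fix the roles the other way around (each vertex is tried as an image $z$, and a partner $x$ is sought), use two permanent flags for $N[X]$ and $N[Z]$ plus a linked list of unblocked blue vertices, and your $\log n$ comes from an (actually optional) initial sort rather than from heaps. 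Your write-up is more rigorous on exactly the points the paper leaves implicit: you check Properties~\ref{prop:outside-nbr}--\ref{prop:no-intersect} via explicit invariants (unblockedness of $z$ gives $z \notin N[X] \cup N[Z]$, hence also $z \notin X$; unblockedness of $x$ gives $x \notin N[Z] \supseteq Z$; the $\mathrm{inNZ}$ scan of $N[z]$ enforces disjointness of closed neighbourhoods of images), and your pigeonhole argument for the partner search is the key step that keeps the total work at $\calO(n+m)$ after sorting, something the paper never argues. Two small remarks: the claim that an exhausted scan of $L$ ``correctly reports that no valid partner exists'' is slightly overstated (a vertex blocked only by $\mathrm{inNX}$ could still be a legal partner under the rule), but this is harmless since the rule does not require a maximal $X$ and skipping such a $z$ is allowed; and note that what the paper's fancier selection heuristic buys is not the lemma but the practical effectiveness of the reduction, whereas your simpler greedy buys a clean, fully verifiable proof of the stated bound.
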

\begin{proof} We present a simple algorithm which implicitly finds a set $X$ and
  a function $\psi$ as described in Reduction Rule~\ref{rr:d-deg}, in a greedy
  fashion. This algorithm repeatedly finds a vertex \(x\) in the current set
  $V_b$ and an image \(z = \psi(x)\) using the heuristic described below, and
  moves \(N[x] \cap V_{b}\) into the set \(V_{r}\). The algorithm stops when it
  can no longer find both a vertex \(x\) and an image \(z\) for \(x\) using the
  heuristic. 

  The algorithm picks a vertex \(x\) in $V_b$ which is adjacent to the largest
  number of vertices in $V_b$. It then selects an image \(z=\psi(x)\) using a
  heuristic intended to block the fewest number of other vertices from being an
  image of a vertex picked in a later step. Note that if $z = \psi(x)$ for some
  $x$ then no vertex which is at distance at most two from $z$ can be an image
  for any other vertex in the set \(X\) (Property~\ref{prop:no-intersect} of
  Reduction Rule~\ref{rr:d-deg}). With a slight abuse of notation, we define the
  size of the second neighborhood of vertex $v$, denoted by $scd\_nbr(v)$ as
  $\sum_{u \in N(v)} |N(u)|$. Note that $scd\_nbr(v)$ is an upper bound on the
  number of vertices that are at distance at most two from vertex $v$. The
  algorithm chooses a vertex \(z\) such that (i) \(z\) is \emph{not} a neighbour
  of vertex \(x\), and (ii) $scd\_nbr(z)$ is the minimum. It then sets
  \(z=\psi(x)\).

  The algorithm can be made to run in the stated time bound using heap data
  structures.
\end{proof}

\section{Setup for Experiments}
\label{sec:exp-setup}

\subsection{Hardware and Code}

We ran our experiments on an Intel(R) Core(TM) i$5-2500$ CPU $@ 3.30$GHz machine
running Debian GNU/Linux~$8$ without using parallel processing.  We wrote the
code to implement the approximation algorithm and the lossy reduction rule in
the \texttt{C} language. We used GCC (v$6.3.0$) to compile the code.  We used
CPLEX (v$12.8.0.0$) Concert Technology for \Cplusplus to compute nearly-optimal
dominating sets, for the sake of comparisons\footnote{See
  Section~\ref{sec:results} for the details.}.  We wrote the code to generate
random graphs of various kinds, in the Python language~(v$3.5.3$) using
networkx~(v$1.11$).  All the code~\cite{code-repo} are available online.

\subsection{Test Cases}

We tested our hypothesis on randomly generated graphs and on graphs obtained
from the SuiteSparse Matrix Collection~\cite{ssmc-paper} (See~\cite{ssmc}).  We
chose graphs with edge count at most ten times their number of vertices.  That
is, our instances were all graphs with $ave\_deg$ at most $20$.  We divided our
test cases into the five categories described below, depending on how they were
constructed or obtained.  Test cases in each of these categories were classified
as \textbf{small}, \textbf{medium}, or \textbf{large}, depending on the number
of vertices.  For small, medium, and large graphs the number of vertices were in
the range $[1000 - 10,000]$, $[10,000-30,000]$, and $[30,000-50,000]$,
respectively.  Table~\ref{table:instances} shows the number of instances in each
category.

\vspace{0.3cm}
\noindent\textbf{Random Binomial Graphs or Erd$\H{o}$s-R$\'{e}$nyi Graphs $G(n, p)$:}
For a fixed positive integer $n$ and a positive constant $p \leq 1$, this model
constructs a graph on $n$ nodes by adding an edge between each unordered pair of
vertices with probability exactly $p$.  
Note that picking an edge is a Bernoulli's trial: an edge is picked with
probability $p$ independently of the existence of other edges.  Hence the
expected number of edges is $p \binom{n}{2}$.  We chose the value of $n$
uniformly at random from the specified range.  Instead of fixing the probability
value \(p\), we fixed the expected average degree ($avg\_deg$) of a target
graph.  We chose the value of $avg\_deg$ uniformly at random from the range
$[6-20]$.  We assigned $p = avg\_deg/n$ and generated the Erd$\H{o}$s-R$\'{e}$nyi
graph.

\vspace{0.3cm}
\noindent\textbf{Uniform Random Graphs $G(n, m)$:}
For two positive integers $n, m$, this model starts with an empty graph on $n$
vertices, and inserts $m$ edges in such a way that all possible
$\binom{\binom{n}{2}}{m}$ choices are equally likely.  Equivalently, it selects
two vertices uniformly at random and adds an edge between them if they are
different and no edge is already present.  It repeats this process until $m$
edges are added to the graph.  As in the previous case, instead of fixing $m$,
we fixed the expected average degree $(avg\_deg)$ of the graph.  Our program
selects $n$ uniformly at random from the specified range and $avg\_deg$ from
$[6 - 20]$.  The program then generates the graph from the model using
$(n, n \cdot avg\_deg/2)$ as the parameters.

\vspace{0.3cm}
\noindent\textbf{Watts-Strogatz Graphs $G(n, d, p)$:}
This model was proposed by Duncan J. Watts and Steven Strogatz
\cite{watts1998collective}.  It is used to produce graphs with small-world
properties i.e. short average path lengths and high clustering.  The model takes
three positive constants $n, d, p$ as parameters where $n, p$ are integers and
$p \le 1$.  It first creates a ring over the set $\{1, \dots, n\}$ and
constructs a vertex for each element in the ring.  Each vertex in the ring is
connected with its $d$ nearest neighbors ($d-1$ neighbors if $d$ is odd)
i.e. $d/2$ on each side.  After this, the model creates some shortcuts by
rewiring edges in the following way: for each edge $uv$ in the original graph
replace it by a new edge $uw$ with probability $p$ where $w$ is chosen uniformly
at random from existing vertices.  Our program selects $n$ uniformly at random
from the specified range.  As the number of expected edges is $ndp/2$, the
program selects $d$ from the range $[3-20]$ and $p$ from
$\{0.1, 0.2, \dots, 0.9\}$ uniformly at random.

\vspace{0.3cm}
\noindent\textbf{Random $d$-Regular Graphs $G(n , d)$:}
This model takes two positive integers $n, d$ and returns a $d$-regular graph on
$n$ vertices which does not contain a self-loop or parallel edges.  Networkx
contains an implementation of this model based on \cite{steger_wormald_1999}.
As the total number of edges is $nd/2$, our program requires that $nd$ be even.
It selects an even value for $n$ in the specified range and $d$ from $[3-20]$
uniformly at random.

\vspace{0.3cm}
\noindent\textbf{Suite Sparse Matrix Collection:} We select matrices from Suite Sparse Matrix Collection and treat them as adjacency matrices. 
Hence, we can work with only \emph{symmetric} matrices in this data set.  The
number of rows (and of columns) corresponds to the number of vertices, and the
number of non-zero elements is equal to twice the number of edges.  As we want
the edge density of resultant graph to be at most $10$, we select matrices for
which the number of non-zero vertices is at most $20$ times of the number of
rows.

\begin{table}[t]
  \begin{center}
  \begin{tabular}{|c|c|c|c|}
    \hline
    & small & medium & large \\
    \hline
    Erd$\H{o}$s-R$\'{e}$nyi Graphs &    100 &    100 &    100 \\
    Uniform Random Graphs  &    100 &    100 &    100 \\
    Watts-Strogatz Graphs & 100 &    100 &    100 \\
    $d$-regular Graphs &    100 &    100 &    100 \\
    SparseSuits Matrix Collection &    51 &    84 &    50 \\
    \hline
  \end{tabular}
  \end{center}
  \caption{First four rows denotes the number of graphs created in each category. The fifth column denotes the number of graphs obtained from SparseSuits Matrix Collection in respective category. \label{table:instances}}
\end{table}

\vspace{0.3cm} We make a few remarks about the Barabasi-Albert model which is
well known for generating random scale-free networks using a preferential
attachment mechanism.  Although one can create sparse graphs using this model, a
dominating set of such graphs tends to be a small fraction of the total number
of vertices.  The lossy reduction rules fail to obtain any improvement in such
cases.  In our experience, the lossy reduction rule yields improvement when the
degree distribution is binomial or uniform.  In the case of the Barabasi-Albert
model, the degree distribution follows a power law.  As per our expectations,
the lossy reduction rule does not yield any improvement. We performed
experiments to confirm this conjecture, and only $4\%$ of instances created
using this model showed any improvement.


\section{Our Experiments and Results}
\label{sec:results}

We implemented the approximation algorithm of Jones et
al.~\cite{jones2017parameterized} for \DomSet on $d$-degenerate graphs, as our
``drop-in'' approximation algorithm. Jones et al. show that if an input graph is
$d$-degenerate then their algorithm outputs a dominating set whose size is at
most $\calO(d^2)$ time that of the optimum. We note that their algorithm can in
fact take a graph $G$ and a bipartition $(V_b, V_r)$ of \(V(G)\) as input and
find a set $S$ which dominates $V_b$, such that the set returned by the
algorithm is $\calO(d^2)$ times the size of the smallest subset of $G$ which
dominates $V_b$. For every graph $G$ in the test cases, we executed the code
which performs the \emph{two} experiments described below. To ensure that the
simple and well-known Reduction Rules~\ref{rr:isolated} and \ref{rr:pendant} do
not overshadow the improvement by our $2$-approximate lossy reduction rule, we
applied these reduction rules in both the experiments.

\vspace{0.3cm}
\noindent \textbf{Exp-\texttt{AA}} (``Approximation Algorithm''): For an input
graph $G$, the code applies Reduction Rule~\ref{rr:isolated} once and Reduction
Rule~\ref{rr:pendant} exhaustively assuming $V_b = V(G)$ and $V_r = \emptyset$
as the initial partition of $V(G)$. It then finds an approximate solution for
the resulting instance using the approximation algorithm of Jones et al. The
code then constructs a solution for the original instance using the Solution
Lifting Algorithms~\ref{sl:isolated} and \ref{sl:pendant}, and returns this
solution.

\vspace{0.3cm}
\noindent \textbf{Exp-\texttt{LA}} (``Lossy Approximation''): For an input graph
$G$, the code applies Reduction Rule~\ref{rr:isolated} once and Reduction
Rule~\ref{rr:pendant} exhaustively assuming $V_b = V(G)$ and $V_r = \emptyset$
as the initial partition of $V(G)$. It then applies the \(2\)-approximate
Reduction Rule~\ref{rr:d-deg} \emph{once} to obtain another instance, say
$G(V_b', V_r')$. Let $S'$ be the set returned by the approximation algorithm of
Jones et al. when given $G(V_b', V_r')$ as input. Moreover, let $S$ be the
solution for $G(V(G), \emptyset)$ constructed by the Solution Lifting
Algorithms~\ref{sl:isolated}, \ref{sl:pendant}, and \ref{sl:d-deg} starting from
\(S'\). The algorithm returns $S$ as a solution for the original
instance. \vspace{0.3cm}

For a graph $G$, we use $\texttt{AA}(G)$ and $\texttt{LA}(G)$ to denote the
cardinalities of the solutions returned by \textbf{Exp-\texttt{AA}} and
\textbf{Exp-\texttt{LA}}, respectively. We use $\texttt{EX}(G)$ to denote the
size of the dominating set obtained using CPLEX. We ran the CPLEX program for
$30$ seconds on each input. In our experience, CPLEX narrows down the gap
between upper and lower limits for the optimum solution to less than $0.5\%$ of
a lower limit within this time.  We ran our experiments on $1385$ instances.

We now briefly discuss the running time to compute $\texttt{AA}(G)$ and
$\texttt{LA}(G)$. As expected the time required to compute the latter is
(slightly) higher, but these two running times are comparable. Consider the
$100$ large Erd$\H{o}$s-R$\'{e}$nyi Graphs instances that we created. Our machine took
$15.503$s and $48.798$s to compute $\texttt{AA}(G)$ and $\texttt{LA}(G)$,
respectively, for \emph{all} the graphs. We observed differences of similar
magnitude in the other sets of instances. Hence we do not present a detailed
analysis of the time comparison.

Table~\ref{table:improvement} summarises the percentage of instances across
different types of instances in which $\texttt{LA}(G)$ is strictly smaller than
$\texttt{AA}(G)$. In other words, in these fraction of total instances, it is
better to find a smaller instance using the reduction rules mentioned in
Section~\ref{sec:lossy}, apply the drop-in approximation algorithm, and then
lift the solution to obtain a dominating set for original, than just applying
the approximation algorithm on the original graph.
For example, for the large Erd$\H{o}$s-R$\'{e}$nyi Graphs, in $57\%$ of the graphs, $\texttt{LA}(G)$ resulted in better solution than $\texttt{AA}(G)$.

\begin{table}[t]
  \begin{center}
  \begin{tabular}{|c|c|c|c|}
    \hline
    & small & medium & large \\
    \hline
    Erd$\H{o}$s-R$\'{e}$nyi Graphs &   61.00    & 77.00    & 57.00 \\
    Uniform Random Graphs  & 71.00 &    60.00     & 63.00 \\
    Watts-Strogatz Graphs & 79.00 &    74.00 &    75.00 \\
    $d$-regular Graphs &  60.00 &     68.00 &    76.00 \\
    SparseSuits Matrix Collection & 52.94 &    58.33    & 54.00 \\
    \hline
  \end{tabular}
  \end{center}
  \caption{Each entry denotes the percentage of instances for which
    $\texttt{LA}(G)$ is strictly smaller than $\texttt{AA}(G)$. That is, this
    is the percentage of instances for which the algorithm which incorporates
    the \(2\)-approximate reduction rule performs better than the algorithm
    which does not use this rule.  \label{table:improvement}}
\end{table}

It is natural to ask for some quantitative measures to see the actual improvement. For every instance in which the above process has shown an improvement, we measure the difference between $\texttt{AA}(G)$ and $\texttt{LA}(G)$ and compare it with $\texttt{EX}(G)$. Table~\ref{table:improvement-quantify} shows the average percentage improvement in each category. 
For example, for the large Erd$\H{o}$s-R$\'{e}$nyi Graphs, if $\texttt{LA}(G)$ results in better solution than $\texttt{AA}(G)$ then, on average, the improvement in solution is $37.11\%$ to that of exact solution.

\begin{table}[t]
  \begin{center}
  \begin{tabular}{|c|c|c|c|}
    \hline
    & small & medium & large \\
    \hline
    Erd$\H{o}$s-R$\'{e}$nyi Graphs & 15.98    & 30.52    & 37.11 \\
    Uniform Random Graphs  & 8.34 &    15.98     &23.49 \\
    Watts-Strogatz Graphs & 7.97    & 19.94 &    32.15 \\ 
    $d$-regular Graphs & 1.79 &     2.00 &    2.30 \\
    SparseSuits Matrix Collection & 12.45     & 12.92    & 4.03 \\         \hline
  \end{tabular}
  \end{center}
  \caption{Each entry denotes the average value $(\texttt{AA}(G) -
    \texttt{LA}(G)) / \texttt{EX}(G) \%$ for the graphs in which $\texttt{AA}(G)
    > \texttt{LA}(G)$ holds. The larger this value, the more the impact of using
    the \(2\)-approximate reduction. \label{table:improvement-quantify}}
\end{table}

Detailed results for all the instances can be found online at~\cite{results-data}. 

\section{Conclusion}
\label{sec:conclusion}
In this work we posit that the \(\alpha\)-approximate reductions proposed by
Lokshtanov et al.~\cite{LokshtanovPRS17-lossy} as part of their pioneering work
on the Lossy Kernelization framework, can in fact be used to derive good
heuristics for obtaining better approximation algorithms. To support this thesis
we derive a \(2\)-approximate reduction for the \DomSet problem, and design an
algorithm which uses this rule as a heuristic to obtain better approximation
ratios for \DomSet. We implement this algorithm and run it on a wide variety of
benchmark instances. We demonstrate that our algorithm obtains smaller
dominating sets for a significant fraction of these benchmarks, as compared to a
state-of-the-art approximation algorithm for \DomSet.

We believe that \(\alpha\)-approximate reductions hold great promise as a way of
designing good approximation heuristics for other optimization problems as well.




\bibliography{references}

\appendix

\end{document}